\documentclass[10pt,twocolumn,letterpaper]{article}

\usepackage[
  top=0.75in,
  bottom=1.0in,
  left=0.625in,
  right=0.625in,
  columnsep=0.25in
]{geometry}

\usepackage[T1]{fontenc}
\usepackage[utf8]{inputenc}
\usepackage{times}
\usepackage{graphicx}
\usepackage{amsmath}
\usepackage{amssymb}
\usepackage{amsthm}
\usepackage{enumitem}
\usepackage{caption}
\usepackage{titlesec}
\usepackage{microtype}
\usepackage{balance}
\usepackage{xcolor}
\usepackage{makecell}
\usepackage{url}

\titleformat{\section}{\normalfont\normalsize\bfseries\centering}{\Roman{section}.}{0.5em}{\MakeUppercase}
\titleformat{\subsection}{\normalfont\normalsize\itshape}{\textit{\Alph{subsection}.}}{0.5em}{}
\titlespacing*{\section}{0pt}{1.2ex plus 0.3ex minus 0.1ex}{0.8ex plus 0.1ex}
\titlespacing*{\subsection}{0pt}{0.8ex plus 0.2ex minus 0.1ex}{0.3ex plus 0.1ex}

\makeatletter
\renewcommand\@biblabel[1]{[#1]}
\makeatother

\newtheorem{theorem}{Theorem}
\newtheorem{definition}{Definition}
\newtheorem{proposition}{Proposition}

\setlist[itemize]{leftmargin=1.5em,topsep=2pt,parsep=0pt,itemsep=2pt}
\setlist[enumerate]{leftmargin=1.7em,topsep=2pt,parsep=0pt,itemsep=2pt}

\begin{document}

\twocolumn[{%
\begin{@twocolumnfalse}
\begin{center}
{\LARGE\bfseries Demand-Driven Vulnerability Detection for Cloud Security Posture Management: Removing Human Rule Authoring from the Disclosure-to-Protection Critical Path\par}
\vspace{14pt}
{\large Prashant Kumar Pathak\par}
\vspace{2pt}
{\normalsize \textit{prashant.pathak@ieee.org}\par}
\vspace{16pt}
\end{center}
\end{@twocolumnfalse}
}]

\noindent\textbf{\textit{Abstract}}---Cloud Security Posture Management (CSPM) systems detect known vulnerabilities by maintaining a rule set, distributing it to customers, and evaluating it against periodically-collected asset inventories. To our knowledge, in publicly documented CSPM architectures, the rule set is environment-agnostic and authored or curated centrally by the vendor; updates to it are batched into release cycles and shipped to customers on a cadence that, depending on the vendor and the complexity of the detection, can range from hours to days. The disclosure-to-protection window---the time from a CVE being published in a public catalogue to the customer's system being capable of detecting affected assets---is therefore bounded by the vendor's release cadence for simple version-match detections, and by additional human authoring time for richer detections that incorporate configuration predicates beyond the affected-software string. This paper proposes an alternative architecture in which the rule set is not vendor-distributed at all but continuously derived, within the customer's tenant, from the intersection of public catalogue feeds and the live asset graph. A rule comes into existence when a catalogue entry and an applicable asset are simultaneously present, and goes out of existence when either input ceases to support it. Derivation is bidirectional: new catalogue entries and new assets both trigger it. Derivation incorporates the full structured-field content of catalogue entries, not only the affected-software predicate, producing detections that current automated matching does not generally cover. The live rule set is bounded by environment diversity rather than by catalogue breadth. Prior systems incrementally evaluate a static rule set; we incrementally derive the rule set itself. We present the threat model, the architecture, formal semantics with an equivalence theorem, complexity analysis, a worked example, and an experimental evaluation---on both synthetic workloads and a real NVD/CISA-KEV catalogue snapshot matched against a synthetic asset graph---of the resulting latency and resource consequences. A byte-identical correctness gate confirms all implementations compute the same findings. The architectural property---a live rule set bounded by environment diversity rather than catalogue size---prunes up to 95.1\% of rules in a low-diversity environment; the indexed derivation separately reduces per-delta applicability work by $12.8\times$ (synthetic) to $340\times$ (real catalogue), and post-ingestion detection latency falls from the rescan interval to tens of microseconds. The contribution is specifically the architectural shift and its latency and resource consequences; rule correctness, alert prioritization, and risks outside structured threat-intelligence catalogues are out of scope and identified as future work.

\vspace{6pt}
\noindent\textbf{\textit{Index Terms}}---Cloud security, CSPM, vulnerability detection, CVE, demand-driven rule derivation, alert latency, resource efficiency, asset graph.

\vspace{6pt}

\section{Introduction}

When a Common Vulnerabilities and Exposures (CVE) entry is published in the National Vulnerability Database, or when a vendor publishes a security advisory, the practical question for a cloud security team is: how long until our detection tooling can identify whether we are affected? In current Cloud Security Posture Management (CSPM) practice, the answer depends on what kind of detection is required, and in both cases the answer is slower than it needs to be.

For simple version-match detections, where the rule is essentially ``does this asset run the affected software at an affected version,'' many commercial CSPMs do automatically consume CVE feeds and produce findings. The latency floor here is not the matching itself but the vendor's content-update cadence: rules are batched into release cycles, tested, and distributed to customers on a schedule that, depending on the vendor and the severity of the detection, can range from hours for high-severity advisories to days for routine ones. The customer waits not because the matching is hard but because the vendor's release pipeline is in the way.

For richer detections, where the rule incorporates configuration predicates that appear in the CVE structured fields beyond the affected-software string---network exposure conditions, deployment-mode requirements, feature-flag prerequisites---publicly documented CSPM architectures do not, to our knowledge, derive these mechanically. A human detection engineer reads the entry, decides which structured fields matter, and writes a query that joins them against the asset graph. The aggregate elapsed time from CVE publication to deployable rule, in industry-typical workflows, is typically several business days for routine CVEs, and can be longer for less prominent ones.

In both cases, the customer is exposed during the window. The vulnerability is publicly disclosed, the affected configurations are visible to attackers consulting the same feeds the defender consults, and the defender's tooling has, in principle, all the information needed to detect the exposure---but does not yet do so.

This paper attacks both cases through one architectural change: the rule set should not be authored or distributed by a vendor at all. It should be a continuously-derived view over two independent inputs the system already has: the public vulnerability catalogue, which the system ingests directly from upstream feeds, and the live asset graph of the protected environment. A rule comes into existence when a catalogue entry and an applicable asset are simultaneously present in the system. It goes out of existence when either input ceases to support it.

The central novelty can be stated in one sentence. Prior systems incrementally evaluate a static rule set; we incrementally derive the rule set itself. The substrate of incremental computation we build on---differential dataflow, incremental Datalog, materialized-view maintenance---is mature and well-understood, and we make no contribution to it. The contribution is the observation that vulnerability detection is naturally a problem of \emph{maintaining a rule set as a derived view}, not of \emph{evaluating a static rule set faster}, and the architectural and formal consequences that follow.

This change has four specific consequences that distinguish the proposed architecture from current practice:

\begin{enumerate}
\item \textbf{The vendor-managed rule distribution layer is removed from the disclosure-to-protection path.} Rules are derived locally, in the customer's own system, directly from the public catalogue. The vendor's release cadence is no longer the latency floor; the upstream feed's publication time is.
\item \textbf{Rule derivation is bidirectional.} A new catalogue entry triggers derivation against the existing asset graph; a new asset triggers derivation against the existing catalogue. The second direction matters in cloud environments where infrastructure is provisioned continuously: a newly-deployed asset is evaluated against every known vulnerability the moment it appears, not at the next scan cycle, and not only against rules a vendor has previously written for its resource class.
\item \textbf{Derivation incorporates the full catalogue entry, not only the affected-software predicate.} The configuration predicates, exposure conditions, and provenance metadata in the structured fields all participate in the derived rule's body, producing detections that current automated matching does not cover.
\item \textbf{The live rule set is bounded by the environment, not by the catalogue.} Rules whose applicability predicates do not match any asset in the environment are not instantiated. The rule set scales with the diversity of the environment, not with the breadth of the catalogue.
\end{enumerate}

The combination produces both faster detection and a smaller operational surface. The latency improvement is the headline: from the days-to-weeks window of human authoring and vendor distribution to a window measured in seconds, bounded by feed propagation and incremental evaluation. The efficiency improvement is the byproduct: rule sets that previously had to be authored conservatively (covering all resource classes a vendor's customer base might contain) collapse to the rules actually needed by each environment.

The contribution is deliberately bounded. We address detection latency and resource cost for vulnerabilities represented in structured threat-intelligence catalogues. We do not address rule correctness beyond what the catalogue's structured fields admit; derived rules over-fire in exactly the cases where a human-authored rule from the same source would over-fire. We do not address alert fatigue, prioritization, or analyst workflow. We do not address risks that have no representation in any catalogue (compliance violations, internal-policy rules, novel attack patterns not yet documented in threat intelligence). Each of these is a natural subject of follow-on work that extends the architecture this paper proposes.

Section~II frames the threat model around the disclosure-to-protection window. Section~III describes the current CSPM architecture and identifies precisely where current systems do and do not automate. Section~IV introduces the demand-driven architecture, with the central figures. Section~V formalizes the architecture as a derived view, states an equivalence theorem between incremental and full re-derivation, analyzes complexity, and works through a representative example. Section~VI states the design choices: when derivation runs, what a catalogue entry is, and how the rule lifecycle is managed. Section~VII presents the experimental evaluation, on both synthetic workloads and a real NVD/CISA-KEV catalogue snapshot. Section~VIII covers related work. Section~IX discusses limitations and the future-work program that extends from this paper. Section~X concludes.

\section{Threat Model}

The operative variable in cloud breach outcomes is not whether detection capability exists at all but how soon detection capability exists relative to the introduction of exploitation capability. A vulnerability becomes exploitable the moment a public catalogue entry, a proof-of-concept exploit, or an in-the-wild detection appears. From that moment, every additional hour the customer's tooling lacks the corresponding detection rule is an hour of exposure during which the affected configuration is visible to attackers but not to defenders.

The adversary in our model is external and competent: familiar with the publication cadence of major catalogues, capable of weaponizing newly-disclosed CVEs within hours~\cite{bilge2012}---increasingly with automated, LLM-assisted exploitation of one-day vulnerabilities~\cite{fang2024llm}---and aware that customers running standard CSPM tooling will lack corresponding detections for several days. The adversary times exploitation to this window. Empirically, only a minority of disclosed CVEs are ever exploited, but those that are tend to be weaponized soon after disclosure~\cite{jacobs2021,suciu2022}; compressing the detection window for that exploited tail is therefore the operative goal, and is orthogonal to the prioritization signals (EPSS, exploitation likelihood) those systems provide, with which the proposed architecture composes. The customer's defensive position during the window is approximately equivalent to the position they would occupy with no detection tooling at all for the specific vulnerability in question: the configuration is undetected and the system has no awareness that anything is amiss.

The defender's security goal is therefore to compress the disclosure-to-detection window. Compressing it does not eliminate the underlying vulnerability---the configuration is still vulnerable until patched---but it converts an undetected exposure into a detected one, which is the operative difference between a successful breach and a contained one. This paper's architectural changes are evaluated against this goal directly: by how much do they reduce the disclosure-to-detection window, and at what resource cost.

\subsection{Latency Decomposition}

The end-to-end latency from a security-relevant cloud event to an alert decomposes into three stages, and it is important to be explicit about which of them the architectural change targets.

The first stage is \emph{cloud-to-asset-graph ingestion}: the time from a real cloud event (a VM provisioned, a security group modified, a software package installed) to the corresponding asset graph delta becoming observable to the CSPM. This stage is bounded by the cloud provider's event-stream propagation, the ingestion adapter's polling cadence or event-subscription latency, and any reconciliation logic the adapter applies. It is typically seconds in event-driven adapter designs and longer in pull-based ones. This latency is a property of the cloud-provider integration; it is the same for both architectures we compare, because both consume the same asset graph from the same ingestion pipeline.

The second stage is \emph{catalogue ingestion}: the time from a CVE being published in a public feed to the corresponding catalogue entry becoming observable to the CSPM. This is also a property of feed propagation, including any delay in the upstream catalogue populating an entry's structured CPE and configuration fields. That feed-propagation-and-enrichment latency is a shared floor for both architectures, exactly as cloud-to-asset-graph ingestion is: both the vendor-distributed and the proposed model consume the same upstream catalogue. What the proposed model removes is only the vendor content-update step layered on top of that shared floor---a step the current model adds and the proposed model does not.

The third stage is \emph{rule derivation and evaluation}: the time from a catalogue or asset graph delta becoming observable to a finding being emitted for an affected asset. This is the stage the architectural change targets directly. In a rescan-based system this stage costs the rescan-cycle window (configured in minutes to hours). In the proposed architecture this stage costs the per-delta derivation-and-evaluation latency (milliseconds).

The total time-to-detection is the sum: $\delta_{\text{ingestion}} + \delta_{\text{derivation}} + \delta_{\text{evaluation}}$. The first term is a shared floor for both architectures. The second and third terms are what we measure and what we reduce. The improvement from days-or-hours-to-detection to ingestion-plus-milliseconds is the substance of the contribution; ingestion itself remains whatever the cloud-provider integration delivers.

\section{Current CSPM Architecture}

Cloud Security Posture Management, and the broader Cloud-Native Application Protection Platform (CNAPP) category that subsumes it~\cite{gartner-cnapp,csa-guidance}, are realized by commercial and open-source tools whose \emph{published} behavior we summarize here. Our characterization is based solely on publicly available product documentation, vendor literature, and standards; we do not claim knowledge of any vendor's proprietary internal architecture, and we describe an architectural \emph{pattern} rather than any specific product.

A CSPM system in current practice consists of four layers~\cite{google-scc,aws-securityhub}. An ingestion layer pulls cloud-provider configuration data via API calls, supplementing it with event-stream subscriptions where available. An asset store records the resulting configuration as a graph or tabular dataset. A rule engine evaluates a rule set against the asset store~\cite{aws-securityhub}. A presentation layer aggregates, ranks, and routes findings.

The rule set is the focus of this paper. To engage honestly with the contribution, it is important to distinguish two categories of rules that current systems handle differently.

\subsection{Simple Version-Match Rules}

For rules of the form ``asset $X$ runs software $S$ at a version within affected range $[v_1, v_2]$,'' many current systems do consume CVE feeds automatically and produce findings. A periodic job fetches the latest entries from the National Vulnerability Database, normalizes them, and joins them against the asset inventory~\cite{aws-inspector,google-scc}. This part of the pipeline is event-driven and operates at the latency of the upstream feed.

Where the latency is consumed in this case is not in the matching but in the distribution layer that sits between the catalogue and the customer's engine. To our knowledge, in publicly documented architectures, commercial CSPMs do not ingest CVE feeds directly into each customer's tenant. Instead, the vendor consumes the feeds centrally, packages the resulting matchers into release artifacts, runs internal QA against representative customer environments, and distributes the artifacts to customers on a release cadence. Published cadences are concrete: standalone Nessus pulls new plugins on a daily interval~\cite{tenable-nessus}, and Qualys documents a vulnerability-signature release pipeline on the order of days after a vendor advisory~\cite{qualys-pipeline}. The cadence varies by vendor and by detection severity, ranging from hours for the most urgent advisories to days for routine ones. The customer's engine then receives the distributed artifacts, indexes them, and applies them to its asset store. The disclosure-to-protection window for simple version-match detection is therefore bounded by the vendor's release cadence, not by the technical difficulty of the matching.

\subsection{Behavioral and Configurational Rules}

A second category of rules incorporates structured fields beyond the affected-software predicate. CVE entries frequently include configuration-specific applicability conditions: network exposure requirements (``the affected service must be reachable on the public internet''), deployment-mode requirements (``the daemon must be running with privilege X''), and feature-flag prerequisites (``the affected behavior is enabled only when option Y is configured''). These conditions are present as structured fields in many entries and are present in advisory prose for nearly all entries.

Publicly documented systems do not, to our knowledge, derive rules that incorporate these conditions automatically. A human detection engineer reads the entry, identifies which structured fields and which prose-described prerequisites matter, and writes a query in the engine's query language that joins them against the asset graph. The resulting rule is then tested, packaged, distributed, and deployed through the same vendor-release pipeline as version-match rules. The wall-clock time from catalogue publication to customer engine, for this category, is the sum of human authoring time, internal review and test time, distribution time, and the customer's local indexing time. It can aggregate to several days for routine CVEs and longer for less urgent ones.

\subsection{Rule Set Composition}

Independently of which category of rule is being shipped, the rule set received by each customer is environment-agnostic. The vendor cannot know which resource classes a given customer's environment contains, so the rule set covers all resource classes the vendor's customer base might collectively contain. A tenant whose entire workload runs on managed Kubernetes still carries rules for unmanaged virtual machines, on-premises directory services, and discontinued cloud-provider services. The engine indexes them, the scheduler considers them, the audit log records their continued existence, and the content-update pipeline distributes their updates.

Figure~\ref{fig:gap} illustrates the latency consequence by tracing a single CVE through both the current pipeline and the proposed alternative. The top track shows a representative current pipeline: each transition between stages has a human or batch-process boundary, and the aggregate window of exposure can span hours to days depending on detection severity. The bottom track shows the proposed pipeline: each transition is an automatic event-driven boundary that does not pass through a vendor release, and the aggregate window collapses to seconds.

\begin{figure*}[!ht]
\centering
\includegraphics[width=0.9\textwidth]{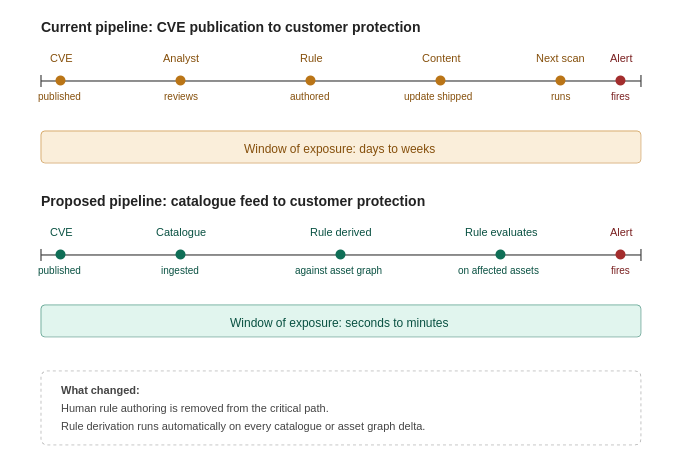}
\caption{The disclosure-to-protection window in current and proposed pipelines. The current pipeline introduces human-authoring, content-update-distribution, and scan-cycle stages between catalogue publication and customer alert. The proposed pipeline removes the human and the scan cycle from the critical path; the remaining stages are bounded by feed propagation, automatic rule derivation, and incremental evaluation.}
\label{fig:gap}
\end{figure*}

\section{Demand-Driven Architecture}

The architectural proposal of this paper is that the rule set should not be a static input to the system but a derived view over two dynamic inputs: a structured vulnerability catalogue maintained by the operator (and continuously populated from public feeds) and the live asset graph of the specific environment under protection. A rule comes into existence when both a catalogue entry that could affect the environment and an asset that the entry could affect are simultaneously present. The rule goes out of existence when either of those preconditions ceases to hold.

\begin{figure*}[!ht]
\centering
\includegraphics[width=0.85\textwidth]{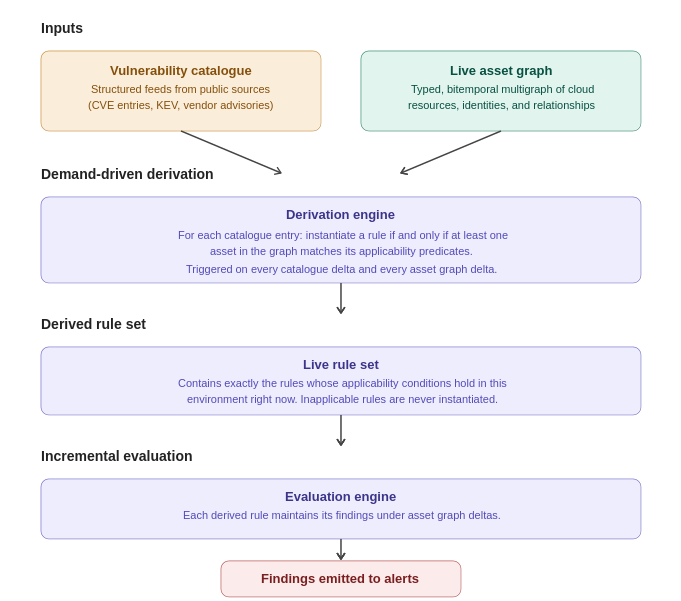}
\caption{Demand-driven CSPM architecture. The vulnerability catalogue and the live asset graph are independently maintained inputs. The derivation engine, triggered by deltas on either input, instantiates a rule for each (catalogue entry, applicable-asset class) pair where the applicability predicates are satisfied. The derived rule set is therefore a function of state: it contains exactly the rules that could produce a non-empty result against the environment as it currently exists. The evaluation engine maintains each derived rule's findings under further asset graph deltas.}
\label{fig:architecture}
\end{figure*}

Figure~\ref{fig:architecture} shows the architecture. We describe each component in turn.

\subsection{The Vulnerability Catalogue}

The catalogue is the system's representation of the world's published knowledge about exploitable conditions. It is populated from structured public feeds, of which the dominant examples are the National Vulnerability Database, the Cybersecurity and Infrastructure Security Agency's Known Exploited Vulnerabilities list, and the structured security-advisory feeds published by major operating system, cloud-provider, and software-package maintainers. Each feed produces entries with varying field coverage; a feed adapter per source normalizes them into a uniform internal schema.

An entry in the catalogue contains an identifier, a source, an affected-software predicate (typically a Common Platform Enumeration string with version ranges), zero or more configuration predicates that further narrow applicability, severity and provenance metadata, and a publication timestamp. We do not require the catalogue to contain the full prose of advisories or proof-of-concept exploits; only the structured fields are consumed. This is a deliberate scope decision: structured fields are sufficient for the latency-and-efficiency contribution this paper makes, and the harder problem of synthesizing rules from unstructured advisory text is left to future work.

\subsection{The Live Asset Graph}

The asset graph is a typed, labeled multigraph of the cloud resources, identities, and relationships currently observable in the protected environment. Nodes correspond to assets (workloads, identities, storage resources, network elements, installed software components, attached configurations). Edges correspond to relationships (assumes, runs, contains, exposes, reaches). Each node and edge carries timestamps that record when the entity exists in the cloud and when the system observed it; the resulting bitemporal structure permits accurate accounting of out-of-order observations and historical reconstruction within a retention window.

For the purposes of this paper, the asset graph is the same primitive used by any modern CSPM. What differs is how the rule set relates to it. In current practice, rules are independent of the graph and evaluated against it. In the proposed architecture, rules are derived from the graph, and only the rules that have join-partners in the graph are instantiated.

\subsection{The Derivation Engine}

The derivation engine is the new component introduced by this paper. Its responsibility is, on every delta to either the catalogue or the asset graph, to recompute which rules should currently exist.

For each catalogue entry, the derivation engine checks whether any asset in the current graph matches the entry's affected-software predicate. If at least one asset matches, the engine instantiates a rule of the form: \texttt{vulnerable(asset, entry\_id) :- runs\_software(asset, S), version\_in\_range(asset, lo, hi), \emph{configuration predicates}}, where the body terms are filled from the catalogue entry's structured fields. The rule enters the live rule set. If no asset matches, no rule is instantiated; the catalogue entry continues to exist in the catalogue but produces no engine-resident artifact.

When the catalogue acquires a new entry, the derivation engine evaluates the new entry against the asset graph and instantiates rules if applicability holds. When the asset graph acquires a new asset, the derivation engine evaluates the new asset against the catalogue and instantiates rules for any catalogue entries that the new asset would match. When an asset disappears, any rules whose applicability depended solely on that asset are retracted. When a catalogue entry is removed (a rescinded CVE), the rules derived from it are retracted. In every case, derivation is incremental: the engine considers only the affected pairs, not the full Cartesian product.

\subsection{The Live Rule Set}

The output of the derivation engine is the live rule set: the set of all derived rules currently instantiated. The rule set has the property, by construction, that every rule in it has at least one matching asset partner. The size of the rule set scales with the diversity of the environment, not with the size of the catalogue: a small environment containing few asset types maintains a small live rule set even against a large catalogue.

The rule set is itself a derived relation in the system, with the same retraction semantics as any other derived data. Audit queries against the historical rule set --- which rules were active at time $t$, and why --- are answered by reconstructing the catalogue state and asset graph state at $t$ within the retention window and re-deriving.

\subsection{The Evaluation Engine}

Each rule in the live rule set is evaluated incrementally against the asset graph: as the graph evolves, the rule's findings update under retraction-preserving semantics. The evaluation engine is essentially conventional, with one observation: because the live rule set contains only applicable rules, the evaluation engine never spends work on rules that cannot match. The total evaluation cost scales with the number of derived rules and the rate of asset graph change, neither of which depends on the size of the underlying catalogue.

\section{Formal Semantics and Complexity}

We now state the demand-driven derivation formally and analyze its complexity. The formalization serves two purposes. First, it gives a precise definition of what ``the rule set is a derived view'' means, against which an implementation can be checked. Second, it allows us to state and prove an equivalence between incremental derivation and full re-derivation, which is the property that lets us replace periodic recomputation with continuous incremental maintenance without weakening correctness.

\subsection{Definitions}

Let $C_t$ denote the catalogue state at logical time $t$: the set of catalogue entries currently known to the system. Each entry $c \in C_t$ is a tuple $c = (\mathit{id}, \mathit{src}, \kappa, \pi, \mu)$, where $\mathit{id}$ is a unique identifier, $\mathit{src}$ is the originating feed, $\kappa$ is the \emph{asset class} the entry targets (e.g.\ the affected software identifier), $\pi$ is a conjunction of \emph{applicability predicates} over asset attributes (version range, configuration conditions, exposure requirements), and $\mu$ is provenance metadata.

Let $A_t$ denote the asset graph state at logical time $t$: the typed, labeled multigraph of assets currently present in the protected environment. Each asset $a \in A_t$ belongs to an asset class $\mathrm{cls}(a)$ (its software identifier) and carries an attribute record and edges to other assets. Let $\mathrm{Cls}(A_t) = \{\, \mathrm{cls}(a) \mid a \in A_t \,\}$ denote the set of asset classes present in the environment.

\begin{definition}[Applicability]
An entry $c$ is \emph{applicable} to an asset $a$, written $\mathrm{App}(c, a)$, if and only if $\kappa(c) = \mathrm{cls}(a)$ and every predicate in $c$'s applicability conjunction $\pi$ is satisfied by $a$'s attributes and edges in $A_t$.
\end{definition}

\begin{definition}[The Rule Set]
A \emph{rule} is derived per (catalogue entry, asset class) pair, not per asset. The rule set at time $t$ is
\[
R_t \;=\; \{\, r(c, \kappa) \mid c \in C_t,\; \kappa = \kappa(c),\; \kappa \in \mathrm{Cls}(A_t) \,\}
\]
where $r(c, \kappa)$ is the rule derived from catalogue entry $c$ for asset class $\kappa$. A rule exists exactly when the catalogue entry's targeted class is present in the environment; its body is the applicability conjunction $\pi(c)$, evaluated against members of class $\kappa$.
\end{definition}

This definition makes the central architectural commitment formal: $R_t$ is a function of the two inputs $C_t$ and $A_t$, not an independent artifact. The rule set is bounded by $|C_t|$ and, more tightly, by the catalogue entries whose target class is present in the environment---so $|R_t| \le |\{ c \in C_t : \kappa(c) \in \mathrm{Cls}(A_t)\}|$. A rule per class, rather than per asset, is what keeps the rule set bounded by environment \emph{diversity} rather than environment \emph{cardinality}.

\begin{definition}[Findings]
The findings set at time $t$ is the per-asset evaluation of every rule against the assets of its class:
\[
F_t \;=\; \{\, (a, c) \mid c \in C_t,\; a \in A_t,\; r(c, \mathrm{cls}(a)) \in R_t,\; \mathrm{App}(c, a) \,\}.
\]
\end{definition}

The two-level structure is deliberate: \emph{rules} are per class (so the rule set scales with diversity), while \emph{findings} are per asset (so detection is per-resource). A single rule $r(c, \kappa)$ produces a finding for each asset of class $\kappa$ whose attributes satisfy $\pi(c)$.

\subsection{Incremental Derivation}

The architecture maintains $R_t$ and $F_t$ incrementally under deltas to its inputs. A delta $\Delta C$ is a pair of sets $(C^+, C^-)$ of inserted and retracted catalogue entries; similarly $\Delta A = (A^+, A^-)$ for the asset graph.

\begin{definition}[Incremental Derivation Step]
On a catalogue delta $\Delta C = (C^+, C^-)$, a rule $r(c, \kappa(c))$ is added for each $c \in C^+$ whose class is present ($\kappa(c) \in \mathrm{Cls}(A_t)$), and removed for each $c \in C^-$. On an asset delta $\Delta A = (A^+, A^-)$, the addition of an asset $a$ whose class was previously \emph{absent} instantiates rules $r(c, \mathrm{cls}(a))$ for every $c \in C_t$ with $\kappa(c) = \mathrm{cls}(a)$; the removal of the last asset of a class retracts all rules for that class. Findings are updated per affected asset: an asset delta touches only the findings for that asset, and a catalogue delta touches only the findings for assets of the entry's class.
\end{definition}

This is the precise sense in which derivation is bidirectional and incremental: a catalogue delta touches only the matching class's assets, and an asset delta touches only the matching class's catalogue entries---never the full Cartesian product.

\subsection{Equivalence Theorem}

The central correctness property is that incremental derivation produces the same rule set as full re-derivation from scratch.

\begin{theorem}[Incremental--Full Equivalence]
\label{thm:equivalence}
For every sequence of deltas $\Delta_1, \Delta_2, \ldots, \Delta_n$ applied to initial inputs $(C_0, A_0)$, yielding final inputs $(C_n, A_n)$, the rule set $R_n$ produced by applying the incremental derivation step to each $\Delta_i$ in order is equal to the rule set obtained by full re-derivation over $(C_n, A_n)$:
\[
R_n \;=\; \{\, r(c, \kappa) \mid c \in C_n,\; \kappa = \kappa(c),\; \kappa \in \mathrm{Cls}(A_n) \,\}.
\]
\end{theorem}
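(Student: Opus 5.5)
The proof goes by induction on the number of deltas $n$. The invariant is that after the $i$-th incremental step the maintained rule set equals the full re-derivation over the current inputs:
\[
R_i \;=\; D(C_i, A_i) \;:=\; \{\, r(c,\kappa(c)) \mid c \in C_i,\; \kappa(c) \in \mathrm{Cls}(A_i) \,\}.
\]
For the base case, we take the initial rule set to be $D(C_0, A_0)$, computed once by full derivation (or take $C_0 = A_0 = \emptyset$, so that $R_0 = \emptyset$). The theorem is then the invariant at $i = n$. The useful structural fact is that $D$ is a pointwise filter: a rule $r(c,\kappa(c))$ lies in $D(C,A)$ if and only if two independent conditions hold, namely (i) $c \in C$ and (ii) $\kappa(c) \in \mathrm{Cls}(A)$. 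So for each delta it suffices to show that a candidate rule is in the updated set exactly when both conditions hold in the new state.

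For the inductive step, we split on whether $\Delta_{i+1}$ is a catalogue delta or an asset delta. A mixed delta is treated as a catalogue delta followed by an asset delta, which we fix as the intended semantics.

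\textbf{Catalogue delta} $(C^+, C^-)$. Here $A$ is unchanged, so condition (ii) is unchanged. We show the two inclusions by case analysis on a rule $r(c,\kappa(c))$:
\begin{itemize}
\item If $c \in C^+$, it is added exactly when $\kappa(c) \in \mathrm{Cls}(A_i)$.
\item If $c \in C^-$, it is removed.
\item Otherwise its membership is inherited from $R_i = D(C_i, A_i)$.
\end{itemize}
This needs a convention for $C^+ \cap C^-$, say retraction wins, or the assumption that the two sets are disjoint with $C^+ \cap C_i = \emptyset$ and $C^- \subseteq C_i$. We will state this as a well-formedness assumption on deltas.

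\textbf{Asset delta} $(A^+, A^-)$. Here $C$ is unchanged, and $\mathrm{Cls}$ changes only at classes that go from absent to present or from present to absent. Condition (ii) flips for a rule exactly when its class is one of these transitional classes. The step definition instantiates all $r(c,\kappa)$ with $c \in C_i$ for newly present $\kappa$ and retracts all rules for newly absent $\kappa$. For every other class, condition (ii) has the same truth value before and after, so membership is inherited from the hypothesis.

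The main obstacle is the asset-delta case, specifically the phrases ``previously absent'' and ``last asset of a class''. These are stated per asset, while a delta is a set. We need to show that processing $A^+$ and $A^-$ element by element (adds, then removals, in some fixed order) flips class presence exactly for the classes in $\mathrm{Cls}(A_{i+1}) \mathbin{\triangle} \mathrm{Cls}(A_i)$. The plan is to maintain auxiliary per-class reference counts $n_\kappa = |\{a \in A : \mathrm{cls}(a) = \kappa\}|$ as a second invariant. Then ``absent'' means $n_\kappa = 0$, and the first-insertion and last-removal events are precisely the $0 \to 1$ and $1 \to 0$ transitions. A sub-induction over the elements of the delta then shows that the net effect is correct regardless of transient flips within the delta, because each flip is matched by the corresponding rule instantiation or retraction. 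This may also require assuming that $A^+$ contains only fresh assets and $A^-$ only present ones, so that the counts stay accurate.

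The findings set $F_n$ could be handled the same way, but the theorem only claims equality of the rule sets, so we stop there.
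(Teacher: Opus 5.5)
Your proof is correct and follows the paper's route: induction on the number of deltas with the invariant $R_i = \widehat{R}_i$, a catalogue/asset case split, and the same well-formedness assumptions on deltas. The one real difference is in the asset case, where the paper does not need your reference-count sub-induction: it defines $K^+$ and $K^-$ directly as the net newly-present and newly-absent classes of the whole delta, so $\mathrm{Cls}(A_n) = (\mathrm{Cls}(A_{n-1}) \cup K^+) \setminus K^-$ holds by definition. Your element-by-element argument is the more careful justification that the per-asset wording (``previously absent'', ``last asset of a class'') actually realizes that set-level step.
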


\begin{proof}
By induction on the number of deltas $n$, we show that the incrementally maintained $R_n$ equals the full re-derivation $\widehat{R}_n := \{\, r(c, \kappa(c)) \mid c \in C_n,\; \kappa(c) \in \mathrm{Cls}(A_n) \,\}$. We assume each delta is well-formed: $C^+ \cap C^- = \emptyset$ and $A^+ \cap A^- = \emptyset$, retracted entries/assets are present beforehand, and inserted ones are not.

\emph{Base case ($n=0$).} The incremental state is initialized to $R_0 = \{\, r(c, \kappa(c)) \mid c \in C_0,\; \kappa(c) \in \mathrm{Cls}(A_0)\,\} = \widehat{R}_0$.

\emph{Inductive step.} Assume $R_{n-1} = \widehat{R}_{n-1}$, and consider $\Delta_n$.

\emph{Case 1: catalogue delta} $\Delta_n = (C^+, C^-)$. Here $A_n = A_{n-1}$, so $\mathrm{Cls}(A_n) = \mathrm{Cls}(A_{n-1})$, and $C_n = (C_{n-1} \setminus C^-) \cup C^+$. The incremental step computes $R_n = (R_{n-1} \setminus R^-) \cup R^+$ with $R^+ := \{ r(c, \kappa(c)) \mid c \in C^+,\, \kappa(c) \in \mathrm{Cls}(A_n)\}$ and $R^- := \{ r(c, \kappa(c)) \mid c \in C^-\}$. Over the disjoint partition $C_n = (C_{n-1} \setminus C^-) \uplus C^+$,
\begin{align*}
\widehat{R}_n =\ &\{r(c, \kappa(c)) \mid c \in C_{n-1} \setminus C^-,\; \kappa(c) \in \mathrm{Cls}(A_n)\} \\
&\cup\; \{r(c, \kappa(c)) \mid c \in C^+,\; \kappa(c) \in \mathrm{Cls}(A_n)\}.
\end{align*}
The second set is $R^+$. Because $\mathrm{Cls}(A_n) = \mathrm{Cls}(A_{n-1})$, the first set is $\widehat{R}_{n-1}$ with every rule whose entry lies in $C^-$ removed, i.e.\ $\widehat{R}_{n-1} \setminus R^- = R_{n-1} \setminus R^-$ by the hypothesis (a rule in $R^-$ for an absent class is simply not in $R_{n-1}$, so its removal is a no-op). Hence $\widehat{R}_n = (R_{n-1} \setminus R^-) \cup R^+ = R_n$.

\emph{Case 2: asset delta} $\Delta_n = (A^+, A^-)$. Here $C_n = C_{n-1}$, and only $\mathrm{Cls}$ can change. A class $\kappa$ enters $\mathrm{Cls}$ exactly when an asset of class $\kappa$ is added while none was present, and leaves exactly when the last asset of class $\kappa$ is removed; let $K^+$ and $K^-$ be these (disjoint) sets of newly-present and newly-absent classes, so $\mathrm{Cls}(A_n) = (\mathrm{Cls}(A_{n-1}) \cup K^+) \setminus K^-$. The incremental step adds $\{ r(c, \kappa(c)) \mid c \in C_n,\, \kappa(c) \in K^+\}$ and removes $\{ r(c, \kappa(c)) \mid c \in C_n,\, \kappa(c) \in K^-\}$. Since $C_n = C_{n-1}$,
\[
\widehat{R}_n = \{\, r(c, \kappa(c)) \mid c \in C_{n-1},\; \kappa(c) \in (\mathrm{Cls}(A_{n-1}) \cup K^+) \setminus K^- \,\},
\]
which is precisely $\widehat{R}_{n-1}$ with the $K^+$ rules added and the $K^-$ rules removed. By the hypothesis this equals $R_n$.

In both cases $R_n = \widehat{R}_n$, completing the induction.
\end{proof}

\begin{proposition}[Findings Equivalence]
Under Theorem~\ref{thm:equivalence}, the findings set $F_n$ produced by incrementally maintained derivation is equal to the findings set produced by full re-derivation at step $n$, provided rule evaluation is itself maintained incrementally under standard retraction-preserving semantics.
\end{proposition}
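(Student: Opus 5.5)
The plan is to argue by induction on $n$, in parallel with the proof of Theorem~\ref{thm:equivalence}, but now tracking the findings set $F_n$. The target is $F_n = \widehat{F}_n$, where
\[
\widehat{F}_n := \{\, (a, c) \mid c \in C_n,\; a \in A_n,\; r(c, \mathrm{cls}(a)) \in \widehat{R}_n,\; \mathrm{App}(c, a) \,\}.
\]
First I would use Theorem~\ref{thm:equivalence} to replace $R_n$ by $\widehat{R}_n$. The rule-membership condition $r(c,\mathrm{cls}(a)) \in R_n$ is then automatically satisfied whenever $a \in A_n$, $c \in C_n$ and $\kappa(c) = \mathrm{cls}(a)$, because $\mathrm{cls}(a) \in \mathrm{Cls}(A_n)$ by definition. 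Since $\mathrm{App}(c,a)$ already forces $\kappa(c)=\mathrm{cls}(a)$, this reduces the target to the simpler identity
\[
\widehat{F}_n = \{\, (a,c) \mid c \in C_n,\; a \in A_n,\; \mathrm{App}_{A_n}(c,a)\,\}.
\]
Here the subscript records that applicability is evaluated in the graph state $A_n$. This is the key simplification: the rule layer only gates which (entry, class) joins are evaluated, and it never removes a finding that full re-derivation would keep.

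The inductive step then splits into the same two cases as the theorem. For a catalogue delta $(C^+,C^-)$, the incremental step removes all pairs $(a,c)$ with $c \in C^-$ and adds $(a,c)$ for $c \in C^+$ and $a$ of class $\kappa(c)$ satisfying $\pi(c)$. Since $A_n = A_{n-1}$, applicability of the untouched entries is unchanged, and set algebra over the disjoint union $C_n = (C_{n-1}\setminus C^-) \uplus C^+$ gives equality, as in Case~1 of the theorem. For an asset delta $(A^+,A^-)$, pairs with $a \in A^-$ are retracted and pairs with $a \in A^+$ are evaluated against the entries of class $\mathrm{cls}(a)$. These entries have live rules by the already-established $R_n=\widehat{R}_n$, including the case where the class was newly instantiated through $K^+$.

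The main obstacle is that $\pi$ may mention edges and attributes of \emph{other} assets, such as exposure via a ``reaches'' edge. An asset delta can therefore change $\mathrm{App}(c,a')$ for an asset $a'$ that is not itself in the delta, which contradicts the literal claim in the incremental-step definition that an asset delta ``touches only the findings for that asset.'' I would handle this through the stated hypothesis. I would take ``rule evaluation maintained incrementally under standard retraction-preserving semantics'' to mean that, for each live rule, the evaluation engine maintains exactly the materialized view $\{a \in A_n : \mathrm{cls}(a)=\kappa,\ \pi(c) \text{ holds in } A_n\}$; this is the standard correctness guarantee of differential dataflow and counting or DRed view maintenance. I would also make explicit that attribute updates are modelled as a retraction followed by an insertion. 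Under this reading, the per-rule findings equal full recomputation in every state. The remaining thing to check is that rule creation and retraction initialise and discard these views consistently: a newly created rule's view starts from a full evaluation over the current class members, and a retracted rule's findings are all withdrawn. Both facts follow from the case analysis of Theorem~\ref{thm:equivalence}. Taking the union over $\widehat{R}_n$ then yields $F_n=\widehat{F}_n$.
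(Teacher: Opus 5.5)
Your argument is correct and rests on the same two ingredients as the paper's proof. The first is Theorem~\ref{thm:equivalence}, which identifies the live rule set with $\widehat{R}_n$. The second is the hypothesis that retraction-preserving incremental evaluation keeps, for each live rule, exactly the tuples a from-scratch evaluation over $A_n$ would produce. The paper combines these directly, with no induction: $F_n$ is determined by $(R_n, A_n)$, the rule sets agree by the theorem, and the per-rule tuples agree by the evaluation hypothesis.

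Your induction over the two delta cases does not carry the weight. The catalogue-delta case does follow from the inductive hypothesis and the incremental step by set algebra. In the asset-delta case, however, you (rightly) read the hypothesis as per-rule view correctness in \emph{every} state. Once you have that, $F_n=\widehat{F}_n$ holds at step $n$ without the inductive hypothesis, which is precisely the paper's proof. Likewise, correct initialisation of a newly created rule's view and withdrawal of a retracted rule's findings are part of that evaluation hypothesis. They are not consequences of the theorem's case analysis, which only tells you which rules appear or disappear.

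What your version adds is precision about what the proposition actually depends on. Your note that $\mathrm{App}(c,a)$ forces $\kappa(c)=\mathrm{cls}(a)\in\mathrm{Cls}(A_n)$ shows that $\widehat{F}_n$ is just the plain applicability join over $(C_n,A_n)$. The theorem's only role is therefore to ensure the engine's live-rule gate neither suppresses nor adds findings. Your observation that $\pi$ may depend on edges to other assets (with attribute updates modelled as retract-plus-insert) is also valuable. It shows the result cannot be obtained from the incremental step's clause that an asset delta ``touches only the findings for that asset''. The result genuinely needs the view-maintenance hypothesis, which the paper's proof uses without remarking on this tension.
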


\begin{proof}
By Theorem~\ref{thm:equivalence}, $R_n = \widehat{R}_n$, so the live (rule, asset-class) pairs available for evaluation are identical under both regimes. By the definition of $F_t$, the findings set is determined by $(R_n, A_n)$ alone: for each $r(c, \mathrm{cls}(a)) \in R_n$ it contains the pair $(a, c)$ exactly when $\mathrm{App}(c, a)$ holds. Standard retraction-preserving incremental evaluation maintains, for each live rule, precisely the tuples a from-scratch evaluation over $A_n$ would produce~\cite{mcsherry2013,dbsp}. Hence the incrementally maintained $F_n$ equals that of full re-derivation.
\end{proof}

The equivalence theorem establishes that the demand-driven architecture is observationally equivalent to a static-rule-set architecture that periodically re-derives its rule set from $C_n$ and $A_n$. The contribution of the architecture is not in producing a different rule set; it is in producing the same rule set faster and at lower cost. The theorem is what licenses replacing the periodic re-derivation with continuous incremental maintenance.

\subsection{Complexity Analysis}

We analyze the cost of maintaining $R_t$ in two regimes: full re-derivation (the baseline) and incremental derivation (the proposed approach).

\paragraph{Notation.} Let $|C|$ and $|A|$ denote the sizes of the catalogue and asset graph respectively. Let $|\Delta C|$ and $|\Delta A|$ denote the sizes of incoming deltas. Let $k$ denote the average selectivity of an applicability predicate: the expected fraction of assets matching a given catalogue entry's applicability condition (typically $k \ll 1$ in real environments, since most CVEs affect a small subset of resource types). Let $b$ denote the branching factor on the indexed join: the number of assets sharing the most common key value (e.g., the number of assets running a given software package).

\paragraph{Full re-derivation.} A from-scratch derivation of $R_t$ requires evaluating $\mathrm{App}(c, a)$ for each $(c, a)$ pair. With indexed joins on the affected-software predicate, the cost is
\[
\Theta(|C| \cdot b) \;=\; O(|C| \cdot |A|)
\]
in the worst case where every catalogue entry's affected-software predicate has many asset partners. Without indexed joins, the cost is $\Theta(|C| \cdot |A|)$ regardless. In practice, this is what current static-rule-set architectures pay on every scan cycle, with $|C|$ replaced by the vendor's full rule set.

\paragraph{Incremental derivation on a catalogue delta.} When a new entry $c \in C^+$ arrives, the engine queries the asset graph index for assets matching $c$'s applicability predicates. The cost decomposes into a constant-time (or $O(\log |A|)$) index lookup plus a term proportional to the size of the matched join partition. Letting $k$ denote the average selectivity of an applicability predicate (the fraction of assets matching a given catalogue entry):
\[
\Theta(|\Delta C| \cdot (\log |A| + k \cdot |A|)).
\]
Which term dominates depends on selectivity. For workloads dominated by rare-software entries (where $k$ is very small), the logarithmic term dominates. For workloads with non-trivial $k$, the linear term dominates and per-delta cost grows with the size of the matched partition, not with the size of the full graph.

\paragraph{Incremental derivation on an asset delta.} Symmetric, with $k'$ denoting the fraction of catalogue entries that target the asset's software stack:
\[
\Theta(|\Delta A| \cdot (\log |C| + k' \cdot |C|)).
\]

\paragraph{Comparison.} The baseline pays $\Theta(|C| \cdot |A|)$ on every periodic recomputation, regardless of how small the incoming change is. The incremental approach pays only for the change, with the cost bounded by the size of the matched join partition rather than by the product of the two full inputs. The practical speedup depends on selectivity: for low-selectivity workloads (most assets are unaffected by most CVEs, the typical operating regime), the gap between $k \cdot |A|$ and $|C| \cdot |A|$ is several orders of magnitude. The crossover at which incremental loses to batch is $|\Delta C| + |\Delta A| \approx \min(|C|, |A|)$, a scenario that does not occur in steady-state operation.

\paragraph{Live rule set size.} Because a rule is derived per (entry, asset class) pair rather than per (entry, asset) pair, the size of $R_t$ is bounded by the number of catalogue entries whose target class is present in the environment: $|R_t| \le |\{ c \in C_t : \kappa(c) \in \mathrm{Cls}(A_t)\}| \le |C|$. In the worst case, where every catalogue entry's class is present, this is $O(|C|)$ regardless of how many assets populate each class; it never reaches $O(|C| \cdot |A|)$. More tightly, if catalogue entries are distributed across classes and the environment contains $d = |\mathrm{Cls}(A_t)|$ distinct classes, the live rule set is bounded by the entries falling in those $d$ classes. For environments where $d$ is small relative to the catalogue's class coverage---which describes the typical cloud tenant, since assets fall into a small number of resource classes---the live rule set is substantially smaller than $|C|$. This is the formal statement of the efficiency claim: rule-set size scales with environment diversity, not with environment cardinality, and is bounded by catalogue breadth only when the environment's class diversity matches the catalogue's.

\begin{figure}[!ht]
\centering
\includegraphics[width=\columnwidth]{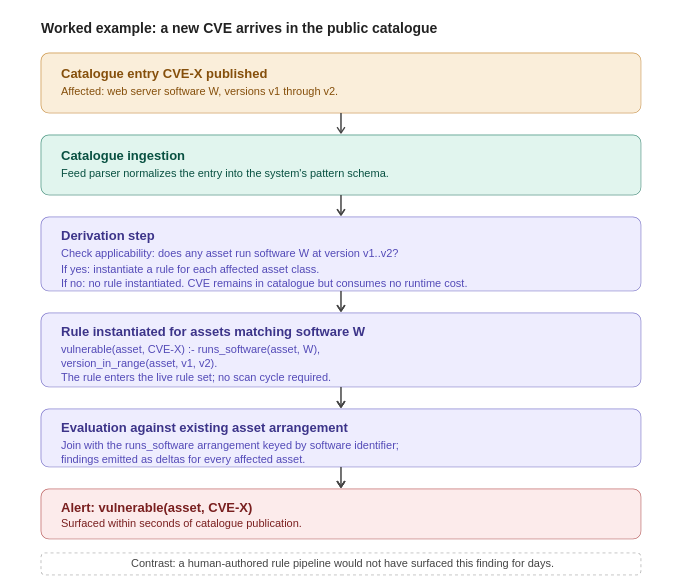}
\caption{A new catalogue entry flows through the system. Each stage is event-triggered; there is no scan cycle and no human in the loop between catalogue publication and alert emission.}
\label{fig:flow}
\end{figure}

Figure~\ref{fig:flow} traces a single new catalogue entry through the system; we walk through it. Suppose a CVE entry is published at time $t_0$ in a public feed. It describes a vulnerability in web server software $W$ affecting versions $v_1$ through $v_2$ inclusive. The feed adapter ingests the entry at time $t_0 + \delta_{\text{feed}}$, where $\delta_{\text{feed}}$ is bounded by the feed's polling interval (typically seconds for high-frequency feeds, minutes for slower ones). The entry is normalized into the system's internal catalogue schema and inserted.

The insertion is a delta on the catalogue. The derivation engine, triggered by the delta, evaluates the new entry's applicability predicates against the asset graph. The applicability predicate decomposes into a join: which assets in the graph have $\texttt{runs\_software}$ relationships to software identifier $W$ at a version within $[v_1, v_2]$. The join is indexed; the engine identifies the matching asset partition in time proportional to the partition's size, not the graph's size.

If the partition is non-empty, the engine instantiates a rule for the matching asset class. The rule's body is constructed from the catalogue entry's structured fields. The rule enters the live rule set as a derived data item with the same provenance and retention semantics as any other derived data. If the partition is empty, no rule is instantiated; the catalogue entry remains in the catalogue, available to be activated later when an applicable asset eventually appears, but consumes no further engine state in the meantime.

If a rule was instantiated, it is now part of the live rule set. The evaluation engine, triggered by the rule's instantiation delta, evaluates the rule against the asset graph. Because the join arrangements are already maintained, the evaluation reduces to consulting the relevant indexed partition. The rule produces a finding for each matching asset; the findings emit to the alert layer.

The total elapsed time from $t_0$ to first alert is approximately $\delta_{\text{feed}} + \delta_{\text{derive}} + \delta_{\text{evaluate}}$. The first term is bounded by feed mechanics and is typically seconds. The second term is bounded by the size of the affected join partition, which is typically a small constant. The third term is similarly bounded. The aggregate is on the order of seconds. The corresponding aggregate in the current architecture, by contrast, is the sum of human-authoring time, content-update distribution time, and scan-cycle latency, which together are days.

A second illustration of the architecture's behavior comes from considering what happens when the environment does not contain assets running $W$. The catalogue entry arrives, the derivation engine evaluates applicability, the matching join partition is empty, and no rule is instantiated. The environment incurs no detection cost for this CVE because no detection rule exists. If a developer later deploys an asset running $W$ at an affected version, the resulting delta on the asset graph triggers the derivation engine to evaluate the new asset against the catalogue; the existing CVE entry is matched, a rule is instantiated, and the asset is flagged immediately upon appearance.

\section{Design Choices}

Three design decisions warrant explicit statement.

\subsection{Eager Derivation}

The derivation engine is triggered eagerly: every delta on the catalogue or the asset graph invokes the derivation step immediately. The alternative, lazy derivation, would defer rule instantiation until an explicit evaluation request arrived. Lazy derivation is more conservative of compute in the limit but loses the latency property that motivates this work: if rules are not derived until some later trigger, the window between catalogue ingestion and rule availability is bounded by that trigger, which in practice would have to be approximated by a periodic poll. Eager derivation, in contrast, ensures that the only delay between catalogue publication and rule existence is the time to actually perform the derivation, which is short.

The eagerness applies to both directions of the join. A new catalogue entry triggers derivation against the existing asset graph. A new asset triggers derivation against the existing catalogue. Both are handled by the same engine and the same join indexes.

\subsection{Catalogue Entry Schema}

A catalogue entry, as consumed by the derivation engine, is a tuple with the following fields: identifier (the CVE number or equivalent), source (the feed that produced the entry), affected-software predicates (CPE strings or similar, with version ranges), optional configuration predicates (network exposure requirements, feature flags, deployment-mode predicates), severity and provenance metadata, and timestamps. Different feeds populate these fields with different completeness; per-feed adapters normalize.

The structured-fields-only choice is deliberate. The hard problem of parsing advisory prose and extracting structured exploitation prerequisites is not in scope for this paper. The contribution of this paper holds even if the structured fields are imperfect: derived rules have the same correctness profile as a rule a human would have written from the same structured fields. Where the structured fields are insufficient to express full exploitation conditions, the derived rule over-fires, exactly as a human-authored rule from the same input would. This is the same precision-recall tradeoff that current detection-engineering practice makes; our contribution is making it at automation speed rather than at human-review speed.

\subsection{Rule Lifecycle}

Rules in the live rule set come into existence when their applicability conditions begin to hold and go out of existence when those conditions cease to hold. In the bitemporal model, every rule's existence interval has both a valid time (when the applicability conditions hold in the cloud) and a transaction time (when the system observes the conditions). Audit queries within the retention window reconstruct the historical rule set by querying these timestamps.

Rule existence is itself a derived relation. The derivation engine does not treat rules as a special class of object; rules are derived data, like any other data, and participate in the same retraction-preserving incremental computation as the rest of the system. This uniformity is what allows the architecture to be implemented as a single dataflow pipeline rather than as a rule engine paired with a separate rule-maintenance system.

\section{Evaluation}

\noindent\emph{Note on scale: The numbers reported below come from a reference single-node Python prototype (Apple~M1, 8~cores, 16~GB RAM; CPython~3.11.5) run at $|C| = |A| = 10^4$ with $10^5$ deltas, aggregated across three independent runs with random seeds $\{42, 137, 1729\}$. Latency values are reported as $\text{mean} \pm \text{stddev}$ across runs. The architectural property and the complexity behavior are demonstrated at this scale; absolute numbers are gated by Python's per-operation overhead and would improve in a production implementation, so the robust claims are the relative ordering of the architectures and the language-independent ``checks/delta'' work metric, not the absolute milliseconds. The relative ordering and the structural metrics (live-rule counts, checks/delta, pruning fractions) are deterministic across the seeds we tested.}

\vspace{4pt}

The contribution is evaluated against two questions: how much does the demand-driven architecture reduce the disclosure-to-protection latency, and how much does it reduce the resource cost of maintaining the rule set. Each question maps to a specific experiment configuration.

\subsection{Prototype}

The reference implementation is a single-node prototype, written in standard-library Python, consisting of: a catalogue ingester that polls the NVD JSON feed and the CISA KEV feed and normalizes their structured fields into the internal schema of Section~V (used for the real-data evaluation in Section~VII-J); a parameterized workload generator for the synthetic catalogues and asset graphs used in the controlled scaling and diversity experiments; the three implementations under comparison (static-rescan, filter-at-evaluation, and the proposed demand-driven derivation engine), all sharing one applicability predicate; and an instrumented driver that records per-delta latency and resource counters. The prototype runs on a single workstation. Every experiment is gated by a correctness check that verifies the three implementations emit byte-identical findings on the same workload before any timing is reported; this is what licenses comparing their costs, since they compute the same answer and differ only in how.

The asset graph for experimental purposes is generated synthetically by a parameterized workload generator: assets are drawn from a configurable distribution of resource classes and software versions, with the distribution and the population size set per experiment. The synthetic generation allows us to vary environment size and diversity independently, which is essential for the scaling experiments below.

\subsection{Workload}

The primary workload consists of three streams replayed against the prototype. The catalogue is generated synthetically with a Zipf-distributed software population modeling the heavy-tailed distribution of CVE-affected products in real catalogues. The asset graph is generated synthetically with parameterized diversity (fraction of the software universe present). The delta stream models continuous cloud-infrastructure evolution: assets added and removed, CVEs ingested and rescinded. For the measurements reported here, we use $|C| = |A| = 10^4$ and a delta stream of $10^5$ events, with diversity $0.30$ (a moderate-diversity environment). The scaling experiment in Table~\ref{tab:scaling} sweeps these parameters independently to characterize how per-delta cost grows with input size.

\subsection{Baselines}

Two baselines are run on the same workload. The first is a \emph{static-rescan baseline}: the rule set is loaded once from the catalogue at startup, and on each catalogue change the full rule set is re-evaluated against the full asset graph. This emulates the behavior of a content-update-driven CSPM after a release ships. This baseline represents the \emph{architectural pattern} of vendor-distributed, rescan-based rule evaluation rather than any specific commercial implementation; real products may apply event-driven processing in parts of their pipeline, which the pattern deliberately abstracts away. The second is a \emph{filter-at-evaluation baseline}: the full catalogue is loaded as rules, but each rule is filtered for applicability at evaluation time rather than at derivation time. Concretely, the filter baseline maintains every catalogue entry as a live rule and, on each delta, checks applicability against the full opposing input---all assets for a catalogue delta, all entries for an asset delta. It carries no per-class index, by design: indexing entries by their target class, and instantiating a rule only when that class is present, \emph{is} the demand-driven derivation step under test, so an ``indexed filter'' would already be the proposed architecture in its join behavior.

Two distinct effects therefore separate the two architectures, and we report them as separate metrics rather than conflating them. The first is per-delta \emph{work} (checks/delta, Table~\ref{tab:efficiency}): consulting the class-keyed index instead of scanning the full opposing input. The second is \emph{rule-set size} (live rules, Tables~\ref{tab:efficiency} and~\ref{tab:diversity}): pruning entries whose target class is absent from the environment. We are explicit that the first effect is the join-indexing component of the mechanism---a hypothetical indexed-but-unpruned engine would share the low checks/delta but would still carry all $|C|$ rules. The architectural property that indexing alone cannot provide is the second: a rule set bounded by environment diversity (Section~VII-I), which no static-rule-set engine exhibits, indexed or not. Bidirectional new-asset triggering (Section~VII-G) is a separate axis: it distinguishes the \emph{continuous} engines---both demand-driven and filter-at-evaluation---from the rescan baseline, which waits for a scan cycle, and is therefore not what separates demand-driven from an indexed filter. The contribution is thus the environment-bounded derived rule set, not ``indexed beats unindexed.''

\subsection{Detection Latency}

The headline metric is the wall-clock latency from a catalogue or asset graph delta arriving at the engine to a finding being emitted for the affected asset. We instrument both ends of the pipeline and report the distribution of latencies across all findings produced during the workload. \emph{We measure post-ingestion latency only}: the time the engine itself takes from observing a delta to emitting findings. The cloud-provider-to-asset-graph ingestion stage (Section~II) is excluded, because it is a property of the cloud-provider integration and is the same floor for both architectures we compare. The end-to-end time-to-detection that an operator experiences is the sum of ingestion latency and the post-ingestion latency reported here.

Table~\ref{tab:latency} reports the latency distribution for the demand-driven architecture and the two baselines.

\begin{table}[h]
\centering
\caption{End-to-end post-ingestion detection latency on the prototype workload ($|C|=|A|=10^4$, $10^5$ deltas), aggregated across 3 seeds, reported as $\text{mean} \pm \text{stddev}$. Static-rescan latencies are deterministic functions of the configured rescan interval. The 1-minute rescan is the most aggressive configuration we observe in current practice; the 1-hour and 24-hour rescans are more typical.}
\label{tab:latency}
\footnotesize
\setlength{\tabcolsep}{3pt}
\begin{tabular}{lccc}
\hline
Configuration & p50 & p95 & p99 \\
\hline
Static-rescan (24h interval) & 43{,}200\,s & 82{,}080\,s & 85{,}536\,s \\
Static-rescan (1h interval) & 1{,}800\,s & 3{,}420\,s & 3{,}564\,s \\
Static-rescan (1m interval) & 30\,s & 57\,s & 59\,s \\
Filter-at-eval & 0.55$\pm$0.02\,ms & 0.78$\pm$0.04\,ms & 0.99$\pm$0.17\,ms \\
Demand-driven (this work) & 0.046$\pm$0.005\,ms & 0.39$\pm$0.05\,ms & 0.52$\pm$0.04\,ms \\
\hline
\end{tabular}
\end{table}

The demand-driven architecture achieves mean post-ingestion latency between nearly six and nine orders of magnitude below the static-rescan baselines. Against the most aggressive rescan we observe (1 minute), the latency reduction the rule engine adds on top of ingestion is approximately 650{,}000$\times$ at the median (0.046\,ms vs 30\,s). Against the typical 1-hour configuration, the reduction is approximately 39 million$\times$. The total time-to-detection from cloud event to alert is the sum of this engine latency and ingestion latency, which is a shared floor for both architectures. Against the stronger filter-at-evaluation baseline, demand-driven is roughly 12$\times$ faster at the median (0.046 vs 0.55\,ms) and about 2$\times$ faster at the tail (p95 0.39 vs 0.78\,ms), because it evaluates only the matched join partition rather than scanning the full opposing input on each delta. The bidirectional-triggering experiment below shows the architecture's clearest advantage.

We emphasize that the large ratios against the rescan baselines are arithmetic consequences of comparing a per-delta evaluation against fixed scan intervals (seconds to hours), and we treat them as \emph{derived} observations rather than the primary claim. The robust, implementation-independent results are the work and resource metrics---checks per delta, live-rule count, and CPU (Section~VII-E)---and the structural counts (live rules, pruning fractions), which are deterministic and do not depend on the machine or language.

\subsection{Resource Efficiency}

The second metric is the resource cost of maintaining the rule set: live rule count, memory consumed by indexed arrangements, and CPU consumed per unit of incoming delta. We report each as a function of environment size and asset diversity.

Table~\ref{tab:efficiency} reports the steady-state resource cost at the target workload size, comparing the demand-driven architecture against the two baselines.

\begin{table}[h]
\centering
\caption{Steady-state resource consumption at $|A| = |C| = 10^4$, after processing $10^3$ deltas, aggregated across 3 seeds, reported as $\text{mean} \pm \text{stddev}$. ``Checks/delta'' is the number of applicability evaluations performed per delta---a language-independent measure of the work the architecture must do. The filter-at-evaluation baseline cannot complete $10^5$ deltas in tractable time, so we compare at the smaller delta count.}
\label{tab:efficiency}
\footnotesize
\setlength{\tabcolsep}{3pt}
\begin{tabular}{lccccc}
\hline
Configuration & \makecell{Live\\rules} & \makecell{Mem\\(MB)} & \makecell{CPU/$10^3$\\deltas (s)} & \makecell{Checks\\/delta} \\
\hline
Static-rescan & 13{,}987$\pm$41 & 41.2 & 0.000 & --- \\
Filter-at-eval & 10{,}032$\pm$4 & 88.8 & 0.043 & 6{,}920$\pm$10 \\
Demand-driven & 7{,}753$\pm$74 & 90.9 & 0.011 & 541$\pm$6 \\
\hline
\end{tabular}
\end{table}

The ``Checks/delta'' column quantifies the per-delta work the indexed derivation saves: demand-driven performs 541 applicability evaluations per delta against filter-at-evaluation's 6{,}920---a 12.8$\times$ reduction, independent of implementation language, from evaluating only the matched join partition rather than the full opposing input. As noted in Section~VII-C, this work reduction is the join-indexing component of the mechanism; the environment-bounding property itself---a live rule set whose size is independent of catalogue size---is shown by the live-rule counts here (7{,}753 vs 10{,}032) and, more sharply, by the diversity sweep of Section~VII-I. This work reduction also translates to wall-clock CPU: demand-driven consumes 0.011 s per $10^3$ deltas against filter's 0.043 s, a 3.9$\times$ improvement. The CPU ratio (3.9$\times$) is smaller than the work ratio (12.8$\times$) because demand-driven's per-operation mix---dictionary lookups and set insertions over the matched partition---carries more Python interpreter overhead per item than the baseline's tight inner loop, so a compiled implementation would close more of the gap. Memory is comparable across the continuous approaches; demand-driven's inverted indexes cost slightly more than the rule-set pruning saves at this scale.

\subsection{Scaling}

To characterize per-delta cost as a function of input size, we sweep two parameters independently. First, holding catalogue size constant, we vary the asset graph size from $10^2$ to $10^4$ assets and measure per-delta cost on catalogue deltas. Second, holding environment size constant, we vary the catalogue size from $10^2$ to $10^4$ entries and measure per-delta cost on asset deltas.

Table~\ref{tab:scaling} reports the per-delta cost across the sweep.

\begin{table}[h]
\centering
\caption{Per-delta cost as a function of input sizes, aggregated across 3 seeds, reported as $\text{mean} \pm \text{stddev}$. Observed growth remained substantially sub-linear relative to the baseline's $|C| \cdot |A|$ scaling and was dominated by downstream evaluation over matched assets rather than by the index lookup itself. The per-delta cost grows roughly with the size of the matched join partition, which under Zipf-distributed software in this workload grows close to linearly with input size. A workload with lower-selectivity predicates (rarer software, sparser environments) would compress this growth further toward the logarithmic index-lookup floor.}
\label{tab:scaling}
\footnotesize
\setlength{\tabcolsep}{4pt}
\begin{tabular}{lcccc}
\hline
$|A|$ or $|C|$ & $10^2$ & $10^3$ & $10^4$ & 100$\times$ growth \\
\hline
Catalogue delta cost (ms) & 0.0007 & 0.0042 & 0.0528 & $\sim$75$\times$ \\
\quad stddev (ms)         & 0.0000 & 0.0002 & 0.0033 & \\
Asset delta cost (ms)     & 0.0008 & 0.0049 & 0.0565 & $\sim$71$\times$ \\
\quad stddev (ms)         & 0.0001 & 0.0004 & 0.0009 & \\
\hline
\end{tabular}
\end{table}

\subsection{Bidirectional Triggering}

A separate experiment measures the latency from a new asset arriving in the asset graph to the first finding being emitted for that asset, when the relevant catalogue entry was already present. This isolates the bidirectional triggering claim from Section~I: new assets are evaluated against the full catalogue immediately on arrival, not at the next periodic scan. As with Table~\ref{tab:latency}, the measurement is post-ingestion: the cloud-event-to-asset-graph latency is the same floor for both architectures and is excluded here. What is reported is the additional latency the rule engine adds on top of ingestion.

Table~\ref{tab:bidirectional} reports this latency, comparing the demand-driven architecture against the static-rescan baseline configured at typical industry intervals.

\begin{table}[h]
\centering
\caption{Latency from new-asset arrival in the asset graph to first finding emission for that asset, against a pre-populated catalogue of $10^4$ entries, aggregated across 3 seeds, reported as $\text{mean} \pm \text{stddev}$. Cloud-provider-to-asset-graph ingestion is excluded (a shared floor for both architectures). The static-rescan baselines are bounded by their configured rescan interval; the demand-driven architecture evaluates the new asset against the full catalogue immediately on arrival. The latency reduction the rule engine adds on top of ingestion is between six and nine orders of magnitude.}
\label{tab:bidirectional}
\footnotesize
\begin{tabular}{lc}
\hline
Configuration & Median latency \\
\hline
Static-rescan baseline (24h interval) & 43{,}200\,s \\
Static-rescan baseline (1h interval) & 1{,}800\,s \\
Static-rescan baseline (1m interval) & 30\,s \\
Demand-driven (this work) & 0.042$\pm$0.001\,ms \\
\hline
\end{tabular}
\end{table}

\subsection{Scale to $10^5$}

We extended the workload to $10^5$ in one dimension while holding the other at $10^4$. Symmetric $10^5 \times 10^5$ at our diversity setting is not memory-tractable in single-process Python because the steady-state findings set grows to tens of millions. The asymmetric configurations are arguably more interesting: the architecture's claim is that the live rule set is bounded by environment rather than catalogue, and the large-catalogue-against-moderate-environment regime is where that claim is sharpest. Table~\ref{tab:scale105} reports both directions.

\begin{table}[h]
\centering
\caption{Demand-driven performance at $10^5$ scale, aggregated across 3 seeds, reported as $\text{mean} \pm \text{stddev}$. Configuration A is $|C|=10^5$, $|A|=10^4$ (large catalogue against moderate environment). Configuration B is $|C|=10^4$, $|A|=10^5$ (moderate catalogue against large environment).}
\label{tab:scale105}
\footnotesize
\setlength{\tabcolsep}{4pt}
\begin{tabular}{lcc}
\hline
Metric & A & B \\
\hline
Bulk-load (s) & 15.20$\pm$1.09 & 21.43$\pm$1.27 \\
p50 latency (ms) & 0.63$\pm$0.09 & 0.062$\pm$0.008 \\
p95 latency (ms) & 6.32$\pm$0.72 & 1.18$\pm$0.92 \\
p99 latency (ms) & 7.51$\pm$0.58 & 5.75$\pm$0.75 \\
Peak memory (MB) & 844$\pm$8 & 830$\pm$10 \\
Live rules & 77{,}137$\pm$276 & 7{,}757$\pm$74 \\
Findings & 4.06M$\pm$44k & 3.92M$\pm$45k \\
\hline
\end{tabular}
\end{table}

Two observations. Per-delta latency at $10^5$ scale (config~A median $0.63$\,ms) is still far below the rescan baselines (vs the 30\,s 1-minute rescan $=\sim$48{,}000$\times$ reduction; vs 1-hour rescan $=\sim$2.9~million$\times$). And the live-rule fraction is high ($\sim$77\%) for both configurations because the Zipf-distributed workload concentrates catalogue weight on the popular software classes that the diversity-$0.30$ environment happens to contain ($H(30)/H(100)\approx0.77$). The diversity sweep below decouples the two distributions to expose the pruning effect directly.

\subsection{Diversity Sweep}

The architectural claim is that demand-driven prunes catalogue entries targeting software classes with no assets in the environment, so the rule set size and CPU cost should decrease as environment diversity drops. Tables~\ref{tab:efficiency} and~\ref{tab:scale105} did not reveal this clearly because the catalogue and asset distributions were jointly Zipf-weighted: the active asset classes were always the top-CVE classes, so even a low-diversity environment captured most of the catalogue's weight. To isolate the pruning effect, we re-run the workload with the catalogue uniformly distributed across the software universe (modelling realistic CVE coverage that is decoupled from any one tenant's software mix) and sweep environment diversity from $0.05$ to $0.50$.

\begin{table}[h]
\centering
\caption{Diversity sweep at $|C|=|A|=10^4$, $10^3$ deltas, uniform catalogue, aggregated across 3 seeds, reported as $\text{mean} \pm \text{stddev}$.}
\label{tab:diversity}
\footnotesize
\setlength{\tabcolsep}{3pt}
\begin{tabular}{lcccc}
\hline
Diversity & 0.05 & 0.10 & 0.30 & 0.50 \\
\hline
Filter live rules & 10{,}032$\pm$4 & 10{,}032$\pm$4 & 10{,}032$\pm$4 & 10{,}032$\pm$4 \\
Filter CPU (s) & 0.36$\pm$0.02 & 0.34$\pm$0.00 & 0.35$\pm$0.00 & 0.36$\pm$0.00 \\
Demand live rules & 493$\pm$19 & 978$\pm$22 & 2{,}979$\pm$30 & 5{,}015$\pm$28 \\
Demand CPU (s) & 0.023$\pm$0.001 & 0.023$\pm$0.000 & 0.023$\pm$0.000 & 0.024$\pm$0.000 \\
Pruning & 95.1\% & 90.3\% & 70.3\% & 50.0\% \\
CPU ratio & 0.06$\times$ & 0.07$\times$ & 0.07$\times$ & 0.07$\times$ \\
\hline
\end{tabular}
\end{table}

The pruning effect is now visible. At diversity $0.05$ (a tenant using $\sim$5\% of the software universe), demand-driven maintains only 493 live rules against filter's 10{,}032---a 95\% reduction. The reduction scales linearly: each percentage point of environment diversity adds roughly one percentage point of catalogue coverage. CPU drops correspondingly to $\sim$0.06--0.07$\times$ of the filter baseline (a $\sim$15$\times$ reduction), and the ratio is stable across the diversity range tested because demand-driven's bookkeeping overhead is proportional to live-rule count rather than catalogue size. The CPU advantage is largest when the catalogue distribution is decoupled from the environment distribution, as here; in the earlier jointly-Zipf comparison at diversity $0.30$ (Table~\ref{tab:efficiency}) the active asset classes captured almost all of the catalogue's weight, so the same mechanism yielded a more modest $3.9\times$ CPU reduction. In both regimes the only metric on which demand-driven does not improve is memory, where its inverted indexes cost marginally more than the rule-set pruning saves at this scale.

\subsection{Real-Data Replay}

The synthetic workload is designed for controllable diversity and scale; to confirm the architectural properties hold on real vulnerability data, we additionally replay a real catalogue ingested directly from the public feeds. Using the prototype's catalogue ingester, we pull a fixed publication window from the NVD~2.0 API (1~June--30~August 2024) and the CISA Known Exploited Vulnerabilities feed, normalizing the structured CPE/version fields and the CVSS \texttt{attackVector} into the schema of Section~V. The snapshot yields 7{,}578 catalogue entries spanning 2{,}645 distinct software products (46 flagged as known-exploited)---a far heavier and more fragmented class distribution than the synthetic universe of 100 products. Of the 9{,}271 CVEs published in the window, 7{,}578 (81.7\%) carried usable structured CPE/version applicability at snapshot time; the remaining 18.3\% lacked it and were excluded. This fraction reflects upstream catalogue-enrichment coverage---a feed-side floor shared by any system that consumes structured fields, vendor-distributed or demand-driven (Section~II-A)---not a limitation of the architecture: a derived rule materializes as soon as an entry's structured fields are populated, exactly as a pre-existing entry activates when a matching asset appears. The asset graph remains synthetic so that diversity and scale stay controllable, but is generated over the real software universe and version ranges so that matches arise from real CPE data.

The correctness gate holds on real data: the findings sets of all three implementations are byte-identical (22{,}294 findings on the gate workload). Table~\ref{tab:real} reports latency and resource cost at $|A|=10^4$, diversity $0.30$. The architectural property is in fact \emph{sharper} on real data: because real catalogues span far more software classes, a per-delta derivation touches a much smaller partition, so demand-driven performs only 16 applicability checks per delta against filter-at-evaluation's 5{,}476---a $\sim$340$\times$ work reduction, versus 12.8$\times$ on the synthetic workload (Table~\ref{tab:efficiency}). Median post-ingestion latency is correspondingly lower (0.009 vs 0.286\,ms). A diversity sweep on the real catalogue prunes 57\% of rules at diversity $0.05$ and 22\% at $0.50$; the pruning is more modest than the uniform-catalogue synthetic sweep because real CVE coverage is itself heavy-tailed and concentrated on popular software, so a real tenant lands between the jointly-Zipf and uniform extremes.

\begin{table}[h]
\centering
\caption{Real-data replay on a 7{,}578-entry NVD\,+\,CISA-KEV snapshot (2{,}645 software classes), $|A|=10^4$, diversity $0.30$, aggregated across 3 seeds. The byte-identical correctness gate holds; the per-delta work reduction grows to $\sim$340$\times$ because real catalogues span far more software classes than the synthetic workload.}
\label{tab:real}
\footnotesize
\setlength{\tabcolsep}{3pt}
\begin{tabular}{lcccc}
\hline
Configuration & p50 (ms) & p95 (ms) & \makecell{Live\\rules} & \makecell{Checks\\/delta} \\
\hline
Static-rescan (1h) & 1{,}800\,s & --- & 11{,}554 & --- \\
Filter-at-eval & 0.286$\pm$0.003 & 0.41$\pm$0.08 & 7{,}613$\pm$5 & 5{,}476$\pm$20 \\
Demand-driven & 0.009$\pm$0.000 & 0.075$\pm$0.049 & 5{,}412$\pm$26 & 16$\pm$3 \\
\hline
\end{tabular}
\end{table}

\subsection{Threats to Validity}

The prototype is written in Python and runs single-threaded on a single node; absolute numbers are gated by Python's per-operation overhead. The architectural property we demonstrate---per-delta cost scales with the work the delta touches rather than with total input size, and inapplicable rules consume no engine state---is independent of implementation language. A production implementation in a compiled language would shift the absolute latencies downward by roughly the Python-vs-compiled overhead ratio (typically one to two orders of magnitude) without changing the relative ordering. We report most numbers at $|C|=|A|=10^4$ because filter-at-evaluation does not complete $10^5$ deltas in tractable time in Python. The $10^5$ asymmetric runs in Table~\ref{tab:scale105} demonstrate that demand-driven retains its architectural properties at that scale; symmetric $10^5 \times 10^5$ at this diversity is memory-bound at tens of millions of findings. Establishing production-scale performance---a multi-threaded or distributed implementation, and replay of production asset-graph traces---is the natural next step; the present evaluation establishes the architectural properties (relative cost ordering, the environment-bounded rule set, and bidirectional triggering) at prototype scale rather than absolute production throughput.

The scaling and diversity experiments use synthetic catalogues and asset graphs. This is a deliberate methodological choice rather than a fallback: production CSPM workloads, even when accessible, do not permit the independent control over diversity and scale that the scaling experiment in Table~\ref{tab:scaling} requires. The synthetic generator uses a Zipf-distributed software population intended to match the heavy-tailed distribution of real CVE-affected products. The real-data replay of Section~VII-J removes the synthetic-catalogue assumption by ingesting an actual NVD\,+\,CISA-KEV snapshot; the remaining synthetic element is then only the asset graph, which we generate over the real software universe. Pairing the real catalogue with anonymized production asset graphs is the natural next step, but ablation and scaling experiments require parameterized synthetic workloads in any case.

The latency comparison against commercial practice is necessarily architecture-level rather than head-to-head. Controlled benchmarking against commercial CSPMs is infeasible: their implementations are proprietary, their internal pipeline latencies are not externally observable, and their content-update cadences are reported through marketing rather than measurement. We therefore compare against an emulated static-rescan baseline that implements the architectural pattern attributed to current systems in the public literature---periodic full re-evaluation against a vendor-distributed rule set---running the same workload as the proposed architecture. The emulated baseline is conservative: it implements rescan dynamics but not the human-authoring or content-distribution steps that would extend real-world latency further. The latency reduction we measure against it is therefore a lower bound on the reduction a real CSPM tenant would observe.

The prototype is single-tenant. We believe the architectural property generalizes to multi-tenant deployments where each tenant has its own asset graph and shares a global catalogue, but we do not measure that generalization here.

\section{Related Work}

The space of related work falls into four categories, and the position of this paper's contribution differs in each. Table~\ref{tab:positioning} summarizes the differentiation along the four consequences enumerated in Section~I; we expand on each category below.

\begin{table}[t]
\centering
\caption{Positioning against representative prior approaches along the four differentiators of Section~I. \checkmark: provided; $\sim$: partial; --: not provided. \emph{Continuous}: the rule/analysis set is updated per delta rather than recomputed in batch. \emph{Bidir.}: both new catalogue entries and new assets trigger derivation. \emph{Full fields}: the rule body uses structured fields beyond the affected-software predicate. \emph{Env.-bound}: the live rule/analysis set scales with environment diversity, not with catalogue or model size.}
\label{tab:positioning}
\footnotesize
\setlength{\tabcolsep}{4pt}
\begin{tabular}{lcccc}
\hline
Approach & \makecell{Conti-\\nuous} & \makecell{Bidir.\\deriv.} & \makecell{Full\\fields} & \makecell{Env.-\\bound} \\
\hline
MulVAL~\cite{ou2005} & -- & -- & $\sim$ & -- \\
Cloud Property Graph~\cite{banse2021} & $\sim$ & -- & \checkmark & -- \\
ThreatPro~\cite{manzoor2024} & $\sim$ & -- & $\sim$ & -- \\
SCA tools~\cite{imtiaz2021,dann2022} & \checkmark & -- & -- & -- \\
\textbf{This work} & \checkmark & \checkmark & \checkmark & \checkmark \\
\hline
\end{tabular}
\end{table}

\subsection{Current CSPM and CNAPP Practice}

Commercial and open-source CSPM systems~\cite{torkura2020,torkura2021} automate CVE-to-asset matching for the case where the rule reduces to ``software identifier and version range,'' as discussed in Section~III. A growing body of academic work models cloud posture as a graph-analysis problem---connecting configuration assessment to static code analysis~\cite{banse2021}, stress-testing cloud configurations to surface latent exposure~\cite{minna2022}, and analyzing how threats propagate across layers in dynamic cloud environments~\cite{manzoor2024}. Empirical studies document the prevalence of security misconfigurations in Kubernetes manifests and infrastructure-as-code~\cite{rahman2023k8s,verdet2025terraform,hashmi2026cloud}, and recent systems detect excessive Kubernetes RBAC permissions~\cite{gu2025epscan}, enforce serverless compliance~\cite{gupta2025growlithe}, and apply large language models to misconfiguration detection and cloud-vulnerability modeling~\cite{malul2024genkubesec,kazdagli2024cloudlens}. Recent cloud-security systems work also addresses zero-trust client-state verification~\cite{jha2025zerotrust} and container anomaly detection~\cite{bu2025container}, problems orthogonal to the disclosure-to-protection latency we target. These share our premise that the environment changes continuously, but each treats the analysis or rule set as recomputed from a snapshot rather than maintained as a derived view. To our knowledge, publicly documented architectures in this category do not derive the richer rule body from the other structured fields present in catalogue entries, nor do they eliminate the vendor-managed distribution layer between the catalogue and the customer's engine. The rule sets they distribute are environment-agnostic and the disclosure-to-protection window is bounded by vendor release cadence.

Our contribution differs from current practice in four specific ways, corresponding to the four consequences enumerated in Section~I. First, derivation runs in the customer's tenant directly against public catalogue feeds, with no vendor in the path. Second, derivation is bidirectional: new assets and new catalogue entries both trigger it. Third, derivation incorporates the full structured-field content of catalogue entries, not only the affected-software predicate. Fourth, the live rule set is bounded by the environment rather than by the catalogue. Each of these differences is, individually, an architectural choice rather than a technique invention; the contribution is the combination, applied to vulnerability detection.

\subsection{Attack-Graph Generation}

The academic literature on attack-graph generation begins with Phillips and Swiler~\cite{phillips1998}. Sheyner et al.~\cite{sheyner2002} automated attack-graph construction using symbolic model checking, enabling practical generation for networks with firewalls and intrusion detection systems. Ammann et al.~\cite{ammann2002} addressed scalability by exploiting the monotonicity of the vulnerability relation to prune the reachability graph. MulVAL~\cite{ou2005} and its scalable extension by Ou et al.~\cite{ou2006} build multi-host attack graphs from vulnerability databases and network topology using Datalog-style reasoning. These systems treat the graph and its rules as the output of a batch process; recent surveys catalogue the resulting proliferation of generation techniques and their coverage~\cite{konsta2024}. The Datalog formulation in this lineage is similar in spirit to the rule body our derivation engine produces, and we acknowledge the resemblance: a derived rule for a CVE with configuration prerequisites is conceptually a small attack-graph fragment scoped to a single vulnerability.

The difference is that the attack-graph literature treats the rule set as the artifact to be computed once from a snapshot of vulnerability data and network configuration. Our architecture treats the rule set as a continuously-maintained derived view that updates under deltas to either input. Cloud-specific extensions of attack-graph generation that have appeared in the literature retain this batch orientation and, to our knowledge, do not address the latency between vulnerability disclosure and rule availability that we target here. A recent line of automated security reasoning applies formal and learning-based methods to cloud configurations specifically---model-checking multi-step IAM attack paths in AWS~\cite{shevrin2023iam}, learning interpretable IAM policies~\cite{kazdagli2022iam}, detecting IAM privilege escalation~\cite{hu2023tac}, enriching logical attack graphs with formal ontologies~\cite{sainthilaire2023ag}, and using large language models for automated penetration testing~\cite{deng2024pentestgpt}. These reason over a given configuration snapshot; they are complementary to, and could be driven by, the always-current derived rule set our architecture maintains.

\subsection{Vulnerability Management and Software Composition Analysis}

A separate line of tooling, generally called vulnerability management or software composition analysis (SCA), scans software inventories for known CVEs by matching CPE strings or package-version metadata. These tools consume the same feeds the system we propose consumes, and they are event-driven in the sense that they evaluate continuously rather than only on schedule. Empirical studies, however, show that SCA tools disagree substantially on which dependencies they report as vulnerable~\cite{imtiaz2021,zhao2023}, and that the CPE-to-package mapping is a well-documented source of both false positives and false negatives~\cite{pashchenko2018,dann2022}---evidence that motivates deriving rules from the catalogue's \emph{structured} applicability predicates rather than from free-text matching, and that makes the byte-identical correctness gate of Section~VII a necessary safeguard rather than an incidental one.

The difference from our contribution is the rule body and the substrate against which the rule evaluates. Software composition analysis matches against package inventories, which are flat lists of installed software with versions. Our derived rules match against the full cloud asset graph, including not only installed software but configuration state, network exposure, identity relationships, and the transitive structure of the environment. The rule body that emerges from a catalogue entry's configuration predicates would not type-check against a flat package inventory; it requires the graph-structured asset model that a CSPM maintains.

\subsection{Incremental Computation}

Differential dataflow~\cite{mcsherry2013} and its distributed implementation in the Naiad system~\cite{naiad2013} provide the underlying model for incremental data-parallel computation with nested iteration over changing inputs; the mathematical foundations are developed in~\cite{abadi2015}. Nikoli\'{c} and Olteanu~\cite{nikolic2018} demonstrate that incremental view maintenance costs can be substantially reduced through factorized representation, extending IVM to joins over large relations. DBSP~\cite{dbsp} provides a general algebraic framework for automatic IVM over rich query languages including SQL and Datalog, serving as the formal grounding for our equivalence theorem. Incremental Datalog engines~\cite{souffle,ddlog} provide the substrate for stratified rule evaluation. A subtlety distinguishes our use of these techniques from their conventional application. Classical incremental view maintenance and incremental Datalog maintain the \emph{extension} of a \emph{fixed} query or program---its derived tuples---as base data changes. In our setting the object maintained incrementally is the \emph{rule set itself}: the catalogue is data, each catalogue entry whose target class is present induces a rule, and the program is therefore a derived view of one of its own inputs, re-derived under deltas to either. The filter-at-evaluation baseline of Section~VII is precisely the conventional alternative---incrementally evaluating a fixed, full rule set---and the $12.8\times$ (synthetic) to $340\times$ (real-catalogue) gap in applicability checks per delta is the quantitative cost of carrying rules the environment can never instantiate, which a demand-driven derived rule set never pays. Our contribution is not in the incremental-computation layer; we build on existing techniques and apply them to demand-driven rule derivation from public vulnerability catalogues against a live asset graph. The recognition that this problem has the structure of an incrementally-maintained view, rather than a periodic batch scan, is the conceptual move we contribute. The implementation substrate is borrowed.

\section{Discussion and Future Work}

The contribution of this paper is bounded to latency and resource efficiency for vulnerability detections derived from structured threat-intelligence catalogues. Several natural extensions follow from this scoping.

The most immediate extension is improving the correctness of derived rules. Structured catalogue fields capture the affected-software predicates well but capture exploitation prerequisites unevenly: an advisory may describe a condition in prose that the structured fields do not represent, so derived rules over-fire when that prerequisite does not hold. The same over-firing occurs in human-authored rules from the same source unless the human invests the additional effort of parsing the prose. Closing the gap automatically (language models extracting prerequisites from prose, or grounding ontologies mapping prose terminology to asset attributes) improves precision at additional complexity; the latency benefit of the demand-driven architecture does not depend on solving this first.

A second extension is alert prioritization and analyst feedback. The demand-driven architecture produces alerts as soon as derived rules match, which is the goal, but the alerts arrive without prioritization beyond the catalogue's published severity. Ranking alerts by environmental context (asset criticality, blast radius, exploitation likelihood) and by analyst-feedback signal is a natural follow-on that addresses alert fatigue without conflicting with the latency contribution of this paper.

A third extension is CVE-less risks: compliance posture, internal-policy violations, IAM misconfigurations, and other categories without catalogue representation. These continue to be handled by human-authored rules in any practical deployment of the proposed architecture, which does not preclude them but does not contribute to them either. Whether the demand-driven derivation pattern extends to authored rule sources (compliance frameworks, organizational policies) is a separate research question. A fourth extension is cross-tenant signal: patterns of exploitation observed in one tenant could inform rule prioritization in another, subject to privacy and data-isolation constraints. The architecture is compatible with such an extension but does not require it for the contributions claimed here.

We note three limitations of the present work. First, derived rules inherit the correctness profile of the structured catalogue fields they are derived from; the system is no more correct than its inputs. Second, the latency contribution requires that catalogue feeds be reliable; outages or delays in upstream feeds propagate to corresponding delays in derived rules. Third, the architecture assumes a coherent asset graph; environments with substantial blind spots in their asset inventory will detect vulnerabilities only on the visible portion.

\section{Conclusion}

Current Cloud Security Posture Management practice maintains a vendor-distributed rule set, environment-agnostic by construction, evaluated against a periodically-collected asset inventory. For simple version-match detections the matching is automatic but the vendor's release cadence sets the latency floor at days. For richer detections, human authoring sets the same floor. The resulting rule set consumes resources on rules that cannot apply to the environment under protection.

This paper has argued that both inefficiencies are consequences of one architectural choice: distributing rules from a vendor to a customer rather than deriving them in the customer's tenant from the public catalogue and the live asset graph. The alternative is demand-driven: rules are a continuously-maintained view over two inputs, derived bidirectionally on every delta to either, incorporating the full structured-field content of catalogue entries, and bounded by the environment rather than by the catalogue. The disclosure-to-protection window collapses to milliseconds on top of ingestion because no vendor and no human stand between the upstream feed and the derived rule.

The contribution is bounded: it does not address rule correctness beyond what structured inputs admit, alert prioritization, analyst workflow, or risks outside structured threat-intelligence catalogues. Each is natural follow-on work the architecture enables but does not itself perform. What this paper claims is the architectural shift itself: vulnerability detection is naturally a demand-driven problem, and the current static-distribution model persists for reasons of commercial structure rather than technical necessity. Recognizing the shift opens a program in which faster detection, smaller operational surface, and bidirectional triggering are not improvements layered onto the existing architecture but consequences of choosing a different one.

\balance

\end{document}